\algnewcommand{\algorithmicgoto}{\textbf{go to}}%
\algnewcommand{\Goto}[1]{\algorithmicgoto~\ref{#1}}%
\DeclareMathOperator*{\argmin}{\arg\!\min}
\newtheorem{theorem}{Theorem}
\newtheorem{lemma}{Lemma}
\newtheorem{definition}{Definition}
\begin{document}
\title{Common Due-Date Problem: \\Exact Polynomial Algorithms for a Given Job Sequence}

\author{Abhishek Awasthi$^*$, J\"org L\"assig$^*$ and Oliver Kramer$^\dag$\\ \\
$^*$Department of Computer Science\\
University of Applied Sciences Zittau/G\"orlitz \\ G\"orlitz, Germany\\
\{abhishek.awasthi, joerg.laessig\}@hszg.de
\and
$^\dag$Department of Computing Science\\
Carl von Ossietzky University of Oldenburg\\
Oldenburg, Germany \\
oliver.kramer@uni-oldenburg.de
}

\maketitle
\begin{abstract}
This paper considers the problem of scheduling jobs on single and parallel machines where all the jobs possess different processing times but a common due date. There is a penalty involved with each job if it is processed earlier or later than the due date. The objective of the problem is to find the assignment of jobs to machines, the processing sequence of jobs and the time at which they are processed, which minimizes the total penalty incurred due to tardiness or earliness of the jobs. This work presents exact polynomial algorithms for optimizing a given job sequence for single and parallel machines with the run-time complexities of $O(n \log n)$ and $O(mn^2 \log n)$ respectively, where $n$ is the number of jobs and $m$ the number of machines. The algorithms take a sequence consisting of all the jobs $(J_i, i=1,2,\dots,n)$ as input and distribute the jobs to machines (for $m>1$) along with their best completion times so as to get the least possible total penalty for this sequence. We prove the optimality for the single machine case and the runtime complexities of both. Henceforth, we present the results for the benchmark instances and compare with previous work for single and parallel machine cases, up to $200$ jobs.
\end{abstract}

\section{Introduction}
The Common Due-Date (CDD) scheduling problem involves sequencing and scheduling of jobs over machine(s) against a common due-date. Each job possesses a processing time and different penalties per unit time in case the job is completed before or later than the due-date. The objective of the problem is usually to schedule the jobs so as to get the least total penalty due to earliness or tardiness of all the jobs.  In practice, a common due date occurs in almost any manufacturing industry and its goal is to reduce both, the earliness and tardiness of the goods produced. Earliness of the produced goods is not desired because it requires the maintenance of some stocks leading to some expenses to the industry for storage cost, tied-up capital with no cash flow etc. On the other hand, a tardy job leads to customer dissatisfaction.

When scheduling on a single machine against a common due date, one job at most can be completed exactly at the due date. Hence, some of the jobs might get completed earlier than the common due-date ($d$), while other jobs finish late. Generally speaking, there are two classes of common due-date problems which have proven to be NP-hard, namely the restrictive and non-restrictive CDD problem. In this work we consider the restrictive case of the problem where the common due date is less than the sum of the processing times of all the jobs and each job possess different earliness/tardiness penalties. This problem has proven to be the most difficult problem in this area of research~\cite{biskup}.

\section{Related Work}
Common due-date problems have been studied extensively during the last $30$ years with several variants and special cases~\cite{seidmann,kanet}. In $1981$, Kanet presented an O($n \log n $) algorithm for minimizing the total absolute deviation of the completion of jobs from the due date for the single machine~\cite{kanet}. Panwalkar~\emph{et al.} considered the problem of common due-date assignment to minimize the total penalty for one machine~\cite{panwalkar}. The objective of the problem was to determine the optimum value for the due-date and the optimal job sequence to minimize the penalty function, where the penalty function also depends on the due-date along with earliness and tardiness. An algorithm of O($n \log n$) complexity was presented but the problem considered by them consisted of symmetric costs for all the jobs~\cite{seidmann,panwalkar}. Cheng again considered the same problem with slight variations and presented a linear programming formulation~\cite{cheng1}. In $1991$ Cheng~\emph{et al.} made a very good study on the common due date problem and presented some useful properties for the general case~\cite{cheng}.

A pseudo-polynomial algorithm of O($n^2d$) complexity was presented by Hoogeveen~\emph{et al.} for the restrictive case with one machine when the earliness and tardiness penalty weights are symmetric for all the jobs~\cite{hoogeveen}. In $1991$ Hall studied the un-weighted earliness and tardiness problem and presented a dynamic programming algorithm~\cite{hall}. Besides these earlier works, there have been some research on heuristic algorithms for the general common due date problem with asymmetric penalty costs. James presented a tabu search algorithm for the general case of the problem in $1997$~\cite{james}. More recently Feldmann~\emph{et al.} approached the problem using metaheuristic algorithms namely simulated annealing (SA) and threshold accepting and presented the results for benchmark instances up to $1000$ jobs on a single machine~\cite{biskup,feldmann}. Another variant of the problem was studied by Toksari~\emph{et al.} where they considered the common due date problem on parallel machines under the effects of time dependence and deterioration~\cite{toksari}. Ronconi~\emph{et al.} proposed a branch and bound algorithm for the general case of the CDD and gave optimal results for small benchmark instances~\cite{ronconi}. Rebai~\emph{et al.} proposed metaheuristic and exact approaches for the common due date problem to schedule preventive maintenance tasks~\cite{rebai}.

In this paper we consider both single and parallel machine cases for the CDD problem with asymmetric penalties. We present polynomial exact algorithms to optimize a given job sequence on single and parallel machines. We present our results and compare the single machine results with the results provided in the OR-library~\cite{or}. Besides, we also present new results for the benchmark instances for parallel machines up to $200$ jobs with $k=1$ and the restrictive factor ($h$) of $0.4$ and $0.8$, provided by Biskup~\emph{et al.} in~\cite{biskup}.

\section{Problem Formulation}\label{probform}
In this section we give the mathematical notation of the common due date problem based on~\cite{biskup}. We also define some new parameters which are later used in the presented algorithm in the next section.

\noindent
Let, \\
$n$  =  total number of jobs \\
$m$ = total number of machines \\
$n_{j}$ = number of jobs processed by machine $j$ $( j = 1,2,\dots,m)$  \\
$M_{j}$ = time at which machine $j$ finished its previous job \\
$W_{j}^{k}$ = $k^{th}$ job processed by machine $j$ \\
$P_{i}$ = processing time of job $i$ $( i = 1,2,\dots,n)$  \\
$C_{i}$ = completion time of job $i$ $( i = 1,2,\dots,n)$  \\
$D$ = the common due date \\
$\alpha_{i}$ = the penalty cost per unit time for job $i$ for being early\\ 
$\beta_{i}$ = the penalty cost per unit time for job $i$ for being tardy\\ 
$E_{i}$ = earliness of job $i$, $E_i = \max\{0,D-C_{i}\}$ ($i = 1,2,\dots,n$) \\
$T_{i}$ = tardiness of job $i$, $T_{i}=\max\{0,C_{i}-D\}$ ($i = 1,2,\dots,n$)\;.

The cost corresponding to job $i$ is then expressed as $\alpha_{i}E_{i}+\beta_{i}T_{i}$. If job $i$ is completed at the due date then both $E_{i}$ and $T_{i}$ are equal to zero and the cost assigned to it is zero. When job $i$ does not complete at the due date, either $E_{i}$ or $T_{i}$ is non-zero and there is a strictly positive cost incurred. The objective function of the problem can now be defined as
\begin{equation}\label{ob}
\min \sum\limits_{i=1}^{n} (\alpha_{i}E_{i}+\beta_{i}T_{i}) \;.
\end{equation}

\section{The Exact Algorithm}
Let $J$ be the input job sequence where $J_i$ is the $i$th job in the sequence $J$. Note that without loss of any generality we can assume $J_i=i$, since we can rank the jobs for any sequence as per their order of processing. The algorithm takes the job sequence $J$ as the input and returns the optimal value for Equation~\eqref{ob}. There are three requirements for the optimal solution: allotment of jobs to specific machines, the order of processing of jobs in every machine and the completion times for all the jobs. 

\begin{figure}[bht]
\centering
\includegraphics[width=10cm,height=3cm]{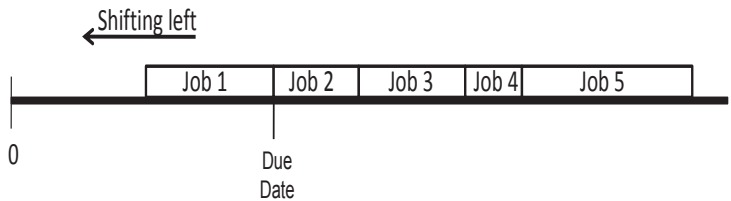}
\caption{Left Shift (decrease in completion times) of all the jobs towards decreasing total tardiness for a sequence with $5$ jobs. Each reduction is done by the minimum of the processing time of the job which is starting at the due date and the maximum possible left shift for the first job.}
\label{fig1}
\end{figure}

Cheng~\emph{et al.}~\cite{cheng} proved some properties of the optimal solution for the restrictive common due date case, one among which is that the optimal solution for the restrictive common due date case  with general penalties exists with no idle times between consecutive jobs. This property implies that all the jobs in any sequence are processed right after another. Using this property, our algorithm assigns the completion times to all the jobs such that the first job is finished at $\max\{P_{1},D\}$ and the rest of the jobs follow without any idle time in order to obtain an initial solution which is then improved incrementally. It is quite apparent that a better solution for this sequence can be found only by reducing the completion times of all the jobs, {\em i.e.\ \/}shifting all the jobs towards decreasing total tardiness penalty as shown in Figure~\ref{fig1} with $5$ jobs. Shifting all the jobs to the right will only increase the total tardiness. 

Hence, we assign the jobs in $J$ to machines such that they incur least weighted tardiness initially and assign the initial completion times to the jobs in each machine as stated in Equation~\eqref{initial}.

\begin{definition}
We define $\lambda$ as the machine which has the least value for $M_{j}$, $j=1,2,\dots,m$, i.e.
\begin{equation*}
\lambda = \argmin_{j=1,2,\dots,m} M_j \;.\\
\end{equation*}
\end{definition}
\begin{algorithm}
\DontPrintSemicolon
$M_{j} \gets 0$ $\forall j=1,2,\dots,m$\;
$n_j \gets 0$ $\forall j=1,2,\dots,m$\;
$i \gets 0$\;
\For {$j \gets 1$ \normalfont\textbf{to} $m$}{
$i \gets i+1$\;
$n_j \gets n_j+1$\;
$W_{j}^{1} \gets i$\;
$M_{j} \gets \max\{P_{i},D\}$\;
}

\For {$i \gets m+1$ \normalfont\textbf{to} $n$}{
\textbf{Compute} $\lambda$\;
$n_{\lambda} \gets n_{\lambda}+1$\;
$W_{\lambda}^{n_{\lambda}} \gets i$\;
$M_{\lambda} \gets M_{\lambda}+P_{i}$\;
}
\caption{Allotment of Jobs to Machines}
\label{alotmac}
\end{algorithm}

Algorithm~\ref{alotmac} assigns the first $m$ jobs to each machine respectively such that they all finish processing at the due date or after their processing time, which ever is higher. For the remaining jobs, we assign a machine $\lambda$ to job $i$ since it offers the least possible tardiness. Likewise each job is assigned at a specific machine such that the tardiness for all the jobs is the least for the given job sequence. The job sequence is maintained in the sense that for any two jobs $i$ and $j$ such that job $j$ follows $i$; the Algorithm~\ref{alotmac} will either maintain this sequence or assign the same starting times at different machines to both the jobs. Finally, Algorithm~\ref{alotmac} will give us the number of jobs $(n_{j})$ to be processed by any machine $j$ and the sequence of jobs in each machine, $W_{j}^{k}$. This is the best assignment of jobs at machines for the given sequence. Note that the sequence of jobs is still maintained here, since Algorithm~\ref{alotmac} ensures that any job $i$ is not processed after a job $i+1$. Once we have the jobs assigned to particular machines, the problem then converts to $m$ single machine problems, since all the machines are independent.

We now present the ideas and the algorithm for solving the single machine case for a given job sequence. From here on we assume that there are $n$ jobs to be processed by a machine and all the parameters stated at the beginning of Section~\ref{probform} represent the same meaning.

\begin{lemma}\label{lemma}
If the initial assignment of the completion times of the jobs, for a given sequence $J$ is done according to $C_{i}$ where,
\begin{equation}\label{initial}
C_{i} = 
\begin{cases}
\max\{P_{1},D\} & \mbox{if } i=1 \\ 
C_{i-1}+P_{i} & \mbox{if } 2 \leq i \leq n \;,
\end{cases}
\end{equation} then the optimal solution for this sequence can be obtained only by reducing the completion times of all the jobs or leaving them unchanged.
\end{lemma}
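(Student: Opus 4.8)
The plan is to argue by a simple exchange/perturbation argument that shifting the entire block of jobs to the right of the initial configuration can never help, so the optimum lies in the set of configurations reachable by left shifts (or by no shift at all). First I would observe that the initial assignment in Equation~\eqref{initial} has no idle time between consecutive jobs and, by construction, the first job completes at $\max\{P_1,D\}$; by the Cheng~\emph{et al.}~\cite{cheng} property quoted just before the lemma, we may restrict attention to schedules with no inserted idle time, so any feasible candidate schedule for this fixed sequence is obtained from the initial one by a single global translation by some amount $t\in\mathbb{R}$ (with $t>0$ meaning a right shift, $t<0$ a left shift). Writing $C_i(t)=C_i+t$ for the completion times, the objective becomes a function $F(t)=\sum_{i=1}^n\bigl(\alpha_i\max\{0,D-C_i-t\}+\beta_i\max\{0,C_i+t-D\}\bigr)$, which is piecewise linear and convex in $t$ as a sum of convex functions.

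Next I would show that $F$ is nondecreasing on $t\ge 0$, which immediately gives the claim since then the minimizer of the convex function $F$ over the feasible range of $t$ occurs at some $t\le 0$, i.e.\ at the initial configuration or a left shift of it. To see the monotonicity on $t\ge 0$: in the initial configuration every job has $C_i\ge\max\{P_1,D\}\ge D$ (the first because $C_1=\max\{P_1,D\}$, the rest because completion times only increase along the sequence), so every job is already tardy or exactly on time, i.e.\ $C_i-D\ge 0$ for all $i$. Hence for $t\ge 0$ we have $C_i+t-D\ge 0$ as well, so $E_i(t)=0$ and $T_i(t)=C_i+t-D$ for every $i$, giving $F(t)=\sum_i\beta_i(C_i+t-D)$ on $[0,\infty)$, which has slope $\sum_i\beta_i\ge 0$ and is therefore nondecreasing. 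Combined with convexity of $F$ on all of $\mathbb{R}$, any global minimizer $t^\star$ satisfies $t^\star\le 0$, which is exactly the statement that the optimal solution is obtained by reducing all completion times or leaving them unchanged.

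The only mild subtlety — and the step I would be most careful about — is justifying that \emph{every} candidate schedule for the given sequence is a pure global translate of the initial one, rather than something with jobs spread apart. This is where the no-idle-time property is essential: it forces the schedule to be a contiguous block whose length $\sum_i P_i$ is fixed by the sequence, so the block has exactly one degree of freedom, its starting time, and changing that starting time relative to the initial block is precisely a translation by $t$. One should also note the edge case $P_1>D$, in which the first job cannot start before time $0$; but this only further restricts $t$ from below and does not affect the argument that the optimum has $t\le 0$. With these observations the lemma follows; the detailed bookkeeping of the breakpoints of $F$ for $t<0$ is not needed here, since the subsequent algorithm handles the left-shifting explicitly.
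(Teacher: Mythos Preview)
Your proof is correct and rests on the same key observation as the paper's: after the initialization in Equation~\eqref{initial} every job satisfies $C_i\ge D$, so any right shift only increases total tardiness and therefore cannot help. The paper establishes this by a direct two-case computation (Case~1: $D>P_1$, compute $PN'-PN>0$ for a shift by $x>0$; Case~2: $D\le P_1$, observe the first job already starts at time~$0$ so no left shift is possible and the initial schedule is optimal). You instead package the argument by introducing the one-parameter family $F(t)$, noting its piecewise-linear convexity, and showing $F$ is nondecreasing on $[0,\infty)$; the minimizer then lies at $t^\star\le 0$.

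The added value of your route is twofold. First, you make explicit the reduction to a single translation parameter by invoking the no-idle-time property of Cheng~\emph{et~al.}, a step the paper leaves implicit in this lemma. Second, the convexity of $F$ that you isolate is exactly the structural fact the paper later exploits (implicitly in Lemma~\ref{lemma3} and in the exponential-search discussion) to justify stopping at the first non-improving iteration; surfacing it here unifies the argument. The paper's version, by contrast, is shorter and needs no appeal to convexity, at the cost of being slightly less self-contained about why only global translates are in play.
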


\begin{proof}
We prove the above lemma by considering two cases of Equation~\eqref{initial}.
\noindent
\textbf{Case 1:} $D > P_{1}$ \\
In this case Equation~\eqref{initial} will ensure that the first job is completed at the due-date and the following jobs are processed consecutively without any idle time. Moreover, with this assignment all the jobs will be tardy except for the first job which will be completed at the due date. The total penalty (say, $PN$) will be $\sum_{\substack{i=1}}^{n} (\beta_{i} \cdot T_{i})$, where $T_{i}=C_{i}-D$, $i=1,2,\dots,n$. Now if we increase the completion time of the first job by $x$ units then the new completion times $C_{i}^{\prime}$ for the jobs will be $C_{i} + x$ $\forall$ $i,(i=1,2,\dots,n)$ and the new total penalty $PN^{\prime}$ will be $\sum_{\substack{i=1}}^{n} (\beta_{i} \cdot T_{i}^{'})$, where $T_{i}^{'}=T_{i}+x$ $(i=1,2,\dots,n)$. Clearly, we have $PN^{\prime} > PN$ which proves that an increase in the completion times cannot fetch optimality which in turn proves that optimality can be achieved only by reducing the completion times or leaving them unchanged from Equation~\eqref{initial}.
\noindent
\textbf{Case 2:} $D \leq P_{1}$ \\
If the processing time of the first job in any given sequence is more than the due-date then all the jobs will be tardy including the first job as $P_1>D$. Since all the jobs are already tardy, a right shift ({\em i.e.\ \/}increasing the completion times) of the jobs will only increase the total penalty hence worsening the solution. Moreover, a left shift ({\em i.e.\ \/}reducing the completion times) of the jobs is not possible either, because $C_{1}=P_{1}$, which means that the first job will start at time $0$. Hence, in such a case Equation~\eqref{initial} is the optimal solution. In the rest of the paper we avoid this simple case and assume that for any given sequence the processing time of the first job is less than the due-date.
\end{proof}

Before stating the algorithm we first introduce some new parameters, definitions and theorems which are useful for the description of the algorithm. We first define $DT_{i} = C_{i}-D$, $i=1,2,\dots,n$ and $ES=C_{1}-P_{1}$. It is clear that $DT_{i}$ is the algebraic deviation of the completion time of job $i$  from the due date and $ES$ is the maximum possible shift (reduction of completion time) for the first job.

\begin{lemma}\label{lemma2}
Once $C_i$ for each job in a sequence is assigned according to Lemma~\ref{lemma}, a reduction of the completion times is possible only if $ES>0$.
\end{lemma}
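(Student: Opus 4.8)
The plan is to show that if $ES \le 0$ then no reduction of completion times is possible, which — combined with the fact that $ES$ is by definition $C_1 - P_1 \ge 0$ always (since the machine cannot start before time $0$) — forces $ES > 0$ as the only remaining case in which any left shift can occur. First I would observe that $ES = C_1 - P_1$ is precisely the start time of the first job, and since no job can begin before time $0$, we always have $ES \ge 0$. So the only way a reduction is \emph{blocked} is when $ES = 0$, i.e.\ the first job already starts at time $0$.

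Next I would argue that when $ES = 0$, the whole block of jobs is pressed against time $0$: by Lemma~\ref{lemma} the jobs are processed consecutively with no idle time, so the completion times are $C_1 = P_1$, $C_2 = P_1 + P_2$, and in general $C_k = \sum_{i=1}^{k} P_i$. Any ``reduction of the completion times'' means decreasing some $C_i$, but since the jobs form a contiguous no-idle block starting at $0$, decreasing any $C_i$ would require the block to start before $0$, which is infeasible. Hence no reduction is possible when $ES = 0$. This is essentially Case~2 of Lemma~\ref{lemma} re-expressed in the new notation, so I would cite that case rather than repeat the argument.

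Putting the two observations together: a reduction of completion times, if it happens at all, happens only when $ES \ne 0$; combined with $ES \ge 0$ this gives $ES > 0$, which is the claim. I would also want to note the converse direction informally — that $ES > 0$ genuinely leaves room to shift the first job (and hence all jobs) left by up to $ES$ units — since that is what makes the lemma useful for the algorithm, though strictly the statement only asserts the necessary condition.

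The main obstacle, such as it is, is bookkeeping rather than mathematics: I must be careful that ``reduction of the completion times'' in the lemma is interpreted consistently with the no-idle-time structure guaranteed by Lemma~\ref{lemma}, so that shifting is genuinely a rigid left-translation of the entire job block (the first job limited by $ES$, the rest following). Once that is pinned down, the proof is immediate from the non-negativity of the start time together with Case~2 of Lemma~\ref{lemma}.
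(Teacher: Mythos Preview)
Your proposal is correct and follows essentially the same reasoning as the paper: the no-idle-time structure from Lemma~\ref{lemma} makes any reduction a rigid left shift of the whole block, which is blocked precisely when the first job already starts at time~$0$, i.e.\ when $ES=0$. The paper's own proof is a one-line version of this (invoking no idle time and concluding a left shift requires $ES>0$); your write-up is more explicit about $ES\ge 0$ and the $ES=0$ case, and your pointer to Case~2 of Lemma~\ref{lemma} is apt, but the underlying argument is the same.
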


\begin{proof}
Lemma~\ref{lemma} proves that only a reduction of the completion times can improve the solution once the initialization is made as per Equation~\eqref{initial}. Besides there is no idle time between any jobs, hence an improvement can be achieved only if $ES>0$, in which case all the jobs will be shifted left by equal amount.
\end{proof}

\begin{definition}\label{pl}
$PL$ is a vector of length $n$ and any element of $PL$ ($PL_{i}$) is the penalty possessed by job $i$. We define $PL$, as
\begin{equation}
PL_{i} =
\begin{cases}
	-\alpha_{i}, & \mbox{if }DT_{i} \leq 0 \\
	\beta_{i}, & \mbox{if }DT_{i} > 0 \;.
\end{cases}
\end{equation}
\end{definition}

With the above definition we can now express the objective function stated by Equation~\eqref{ob} as $\min(Sol)$, where $Sol$:
\begin{equation}\label{obn}
Sol = \sum\limits_{i=1}^{n} (DT_{i} \cdot PL_{i}) \;.
\end{equation}

\begin{algorithm}
\DontPrintSemicolon
\textbf{Initialize} $C_{i}$ $\forall$ $i$ (Equation~\ref{initial})\;
\textbf{Compute} $PL,DT,ES$\;
$Sol \gets \sum\limits_{i=1}^{n} (DT_{i} \cdot PL_{i})$\;
$j \gets 2$\;
\While{$(j < n+1)$}{
$C_{i} \gets C_{i}-\min\{ES,DT_{j}\}$, $\forall$ $i$\;
\textbf{Update} $PL,DT,ES$\;
$V_{j} \gets \sum\limits_{i=1}^{n} (DT_{i} \cdot PL_{i})$\;
\lIf {$(V_{j} < Sol)$}
	{ $Sol \gets V_{j}$}
\lElse{
	\Goto {return}
}
$j \gets j+1$\;
}
\Return $Sol$ \label{return} \;
\caption{Exact Algorithm for Single Machine}
\label{main}
\end{algorithm}

For the parallel machine case we just need to first assign all the jobs to machines using Algorithm~\ref{alotmac} and then implement Algorithm~\ref{main} for each machine $j$ with $n_j$ jobs to optimize the jobs sequence in every machine, as stated in Algorithm~\ref{parallel}.

\begin{algorithm}
\DontPrintSemicolon
\textbf{Algorithm~\ref{alotmac}}\;
\For {each machine}{
\textbf{Algorithm~\ref{main}}\;
}
\caption{Exact Algorithm: Parallel Machine}
\label{parallel}
\end{algorithm}

\section{Proof of Optimality}

\begin{theorem}\label{theorem}
Algorithm~\ref{main} finds the optimal solution for a single machine common due date problem, for a given job sequence.
\end{theorem}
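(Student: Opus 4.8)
The plan is to reduce the optimization over schedules for the fixed sequence $J$ to minimizing a one-dimensional convex function, and then to recognize Algorithm~\ref{main} as an exact line search on that function. By Lemma~\ref{lemma}, together with the absence of idle time in~\eqref{initial} and in any candidate for the optimum (Cheng~\emph{et al.}~\cite{cheng}), the schedules that can improve on the initialization~\eqref{initial} are exactly the left shifts of that block. Writing $C_i$ for the completion time given by~\eqref{initial}, $C_i(s)=C_i-s$ for the block shifted left by $s\ge0$, and $ES=C_1-P_1$ for the initial slack (in the only nontrivial situation, Case~1 of Lemma~\ref{lemma}, $ES=D-P_1>0$; in Case~2, $ES=0$, and the loop makes a zero shift and returns $f(0)$, which Lemma~\ref{lemma} already certifies as optimal), the optimal penalty for the sequence equals $\min_{0\le s\le ES}f(s)$, where
\[
 f(s)=\sum_{i=1}^{n}\bigl(\alpha_i\max\{0,D-C_i(s)\}+\beta_i\max\{0,C_i(s)-D\}\bigr)\;.
\]
By the reformulation~\eqref{obn}, $f(s)=\sum_{i=1}^{n}DT_i(s)\,PL_i(s)$ with $DT_i(s)=C_i(s)-D$, which is exactly the quantity computed as $Sol$ and as each $V_j$ in Algorithm~\ref{main}.

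Next I would show that $f$ is convex and piecewise linear with breakpoints among the numbers $DT_i=C_i-D$: each summand is the affine substitution $s\mapsto C_i-s$ followed by the convex, piecewise linear map $x\mapsto\alpha_i\max\{0,D-x\}+\beta_i\max\{0,x-D\}$ (slopes $-\beta_i$ then $+\alpha_i$ about $x=D$), hence convex with a single kink at $s=DT_i$; in Case~1 one has $DT_1=0$ while $DT_i=\sum_{k=2}^{i}P_k$ is strictly increasing for $i\ge2$. Since a convex piecewise linear function on a closed interval attains its minimum at a breakpoint or an endpoint,
\[
 \min_{0\le s\le ES}f(s)=\min\bigl(\{f(0)\}\cup\{f(DT_j):2\le j\le n,\ DT_j\le ES\}\cup\{f(ES)\}\bigr)\;.
\]

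Then I would verify that Algorithm~\ref{main} evaluates $f$ precisely on this finite set, in non-decreasing order of its argument. An induction on $j$ shows that the cumulative shift after iteration $j$ equals $\min\{ES,DT_j\}$: each step shifts the block left until the current $DT_j$ drops to $0$ (the $j$th job meets the due date), unless the remaining slack would be exceeded first, in which case the cumulative shift becomes $ES$ and every later step shifts by $0$. Hence $Sol$ starts at $f(0)$ and $V_2,V_3,\dots$ run through $f(\min\{ES,DT_2\}),f(\min\{ES,DT_3\}),\dots$, whose values are exactly those in the display above (a repeated argument first occurs once the slack is exhausted, and then $V_j=V_{j-1}$ forces the exit). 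The early-exit rule is then the elementary convexity fact that, while scanning points $0=t_0<t_1<\cdots$ of a convex function and keeping a running minimum, once some $f(t_{k^{*}})$ fails to improve on the running minimum no later value can either: all earlier steps improved strictly, so the running minimum equals $f(t_{k^{*}-1})$, and writing $t_{k^{*}}$ as a convex combination of $t_{k^{*}-1}$ and any later $t_k$ forces $f(t_k)\ge f(t_{k^{*}-1})$. Therefore the returned $Sol$ equals $\min_{0\le s\le ES}f(s)$, the optimal penalty for the sequence.

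The step I expect to be the main obstacle is this last bookkeeping: making it airtight that the increments $\min\{ES,DT_j\}$ enumerate, in order, every breakpoint of $f$ in $[0,ES]$ together with the right endpoint $ES$, and in particular that the restrictive hypothesis $D<\sum_i P_i$ (equivalently $ES<DT_n$) guarantees the loop halts via a non-improving --- indeed zero-length --- step so that no relevant breakpoint is skipped. Once that correspondence is pinned down, convexity of $f$ delivers the conclusion without further calculation.
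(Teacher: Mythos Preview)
Your proposal is correct, and it is a genuinely cleaner route than the paper's. Both proofs start from the same reduction---via Lemma~\ref{lemma} and the no-idle-time property of Cheng~\emph{et al.}---to a one-parameter family of left shifts, and both must justify (i)~that the candidate shift amounts $\min\{ES,DT_j\}$ are the only ones that matter, and (ii)~that the early exit is safe. The paper handles~(i) by invoking a second structural result of Cheng~\emph{et al.} (in any optimum either some job starts at time~$0$ or some job finishes at~$D$), and handles~(ii) by the separate Lemma~\ref{lemma3}, which is really a hands-on slope computation showing $\sum_{i\le j+1}\alpha_i>\sum_{i>j+1}\beta_i$ once $V_{j+1}>V_j$. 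You instead observe once that $f(s)$ is convex piecewise linear with breakpoints exactly at the $DT_i$, from which both~(i) and~(ii) follow for free: the minimum over $[0,ES]$ sits at a breakpoint or the right endpoint, and a convex function scanned at increasing abscissas cannot dip again after a non-improving step. This buys you a self-contained argument that does not need the second Cheng--Kahlbacher property and that makes the termination reasoning (your ``main obstacle'') essentially automatic; the paper's argument stays closer to the algorithm's narrative and to the scheduling literature, but is looser in places (for instance, the claim that a shift by $t<\min\{ES,DT_j\}$ ``increases the number of early jobs by one'' is not literally true for small $t$). Your induction that the cumulative shift after iteration~$j$ equals $\min\{ES,DT_j\}$ (in the original values), together with the observation that the restrictive hypothesis $D<\sum_iP_i$ forces $ES<DT_n$ and hence an eventual zero-length step, is exactly the bookkeeping needed, and it checks out.
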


\begin{proof}
The initialization of the completion times for a sequence $P$ is done according to Lemma~\ref{lemma}.  It is evident from Equation~\eqref{initial} that the deviation from the due date ($DT_{i}$) is zero for the first job and greater than zero for all the following jobs. Besides, $DT_i < DT_{i+1}$ for $i=1,2,3,\dots,n-1$, since $C_{i}<C_{i+1}$ from Equation~\eqref{initial} and $DT_{i}$ is defined as $DT_{i}=C_{i}-D$. By Lemma~\ref{lemma} the optimal solution for this sequence can be achieved only by reducing the completion times of all the jobs simultaneously or leaving the completion times unchanged.

The total penalty after the initialization is $PN=\sum_{\substack{i=1}}^{n} (\beta_{i} \cdot T_{i})$ since none of the jobs are completed before the due date. According to Algorithm~\ref{main} the completion times of all the jobs is reduced by $\min\{ES,DT_{j}\}$ at any iteration. Since $DT_1=0$, there will be no loss or gain for $j=1$. After any iteration of the $while$ loop in line $5$, we decrease the total weighted tardiness but gain some weighted earliness penalty for some jobs. A reduction of the completion times by $\min\{ES,DT_{j}\}$ is the best non-greedy reduction. Let $\min\{ES,DT_{j}\}>0$ and $t$ be a number between $0$ and $\min\{ES,DT_{j}\}$. Then reducing the completion times by $t$ will increase the number of early jobs by one and reduce the number of tardy jobs by one. With this operation; if there is an improvement to the overall solution then a reduction by $\min\{ES,DT_{j}\}$ will fetch a much better solution ($V_{j}$) because reducing the completion times by $t$ will lead to a situation where none of the jobs either start at time $0$ (because $ES>0$) nor any of the jobs finish at the due date since the jobs $1,2,3,\dots,j-1$ are early, jobs $j,j+1,\dots,n$ are tardy and the new completion time of job $j$ is $C_{j}^{'}=C_{j}-t$. 

Since after this reduction $DT_{j}>0$ and $DT_{j}<DT_{j+1}$ for $j=1,2,3,\dots,n-1$, none of the jobs will finish at the due date after a reduction by $t$ units. Moreover, it was proved by Cheng \emph{et al.}~\cite{cheng} that in an optimal schedule for the restrictive common due date, either one of the jobs should start at time $0$ or one of the jobs should end at the due date. This case can occur only if we reduce the completion times by $\min\{ES,DT_{j}\}$. If $ES<DT_{j}$ the first job will start at time $0$ and if $DT_j<ES$ then one of the jobs will end at the due date. In the next iterations we continue the reductions as long as we get an improvement in the solution and once the new solution is not better than the previous best then we do not need to check any further and we have our optimal solution. This can be proved by considering the values of the objective function at two iterations indices; $j$ and $j+1$. Let $V_{j}$ and $V_{j+1}$ be the value of the objective function at these two indexes then we can prove that the solution cannot be improved any further if $V_{j+1}>V_{j}$ by Lemma~\ref{lemma3}.
\end{proof}

\begin{lemma}\label{lemma3}
Once the value of the solution at any iteration $j$ is less than the value at iteration $j+1$, then the solution cannot be improved any further.

\end{lemma}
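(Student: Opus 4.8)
The plan is to recognise the objective, regarded as a function of the single parameter ``total left-shift applied to the schedule of Lemma~\ref{lemma}'', as a convex piecewise-linear function, and then to note that the iterates $V_2,V_3,\dots$ of Algorithm~\ref{main} are the values of this one fixed function at an increasing sequence of arguments; convexity then forces $(V_j)$ to be first non-increasing and then non-decreasing, which is exactly the claim. To set this up, let $S:=C_1-P_1$ be the initial value of $ES$ (the largest admissible total left-shift) and let $d_i$ denote the initial deviations $DT_i$, so that $0=d_1<d_2<\dots<d_n$ exactly as observed in the proof of Theorem~\ref{theorem}. After all completion times have been decreased by a common amount $s\in[0,S]$ we have $C_i-D=d_i-s$, hence $E_i=\max\{0,s-d_i\}$ and $T_i=\max\{0,d_i-s\}$, so by Definition~\ref{pl} the value~\eqref{obn} of the resulting schedule is
\begin{equation*}
f(s)\;:=\;\sum_{i:\,d_i>s}\beta_i\,(d_i-s)\;+\;\sum_{i:\,d_i\le s}\alpha_i\,(s-d_i)\;=\;\sum_{i=1}^n DT_i\cdot PL_i\;;
\end{equation*}
in particular $f(0)=Sol$ before the loop and $V_j=f(s_j)$, where $s_j$ denotes the cumulative shift in force after the $j$-th iteration.

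The two facts I would then establish are: \emph{(i)} $f$ is convex on $[0,S]$ --- on each interval between consecutive values $d_i$ it is affine with slope $\sum_{i:\,d_i<s}\alpha_i-\sum_{i:\,d_i>s}\beta_i$, and as $s$ increases every index can only pass from the negative $\beta$-sum into the positive $\alpha$-sum, so this slope is non-decreasing; and \emph{(ii)} the cumulative shift after iteration $j$ equals $s_j=\min\{S,d_j\}$ --- by induction, if the cumulative shift before iteration $j$ is $a=\min\{S,d_{j-1}\}$, then at that point $ES=S-a$ and $DT_j=d_j-a$, the reduction carried out is $\min\{S-a,d_j-a\}$, and $a+\min\{S-a,d_j-a\}=\min\{S,d_j\}$. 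Since the $d_j$ strictly increase, $s_2\le s_3\le\cdots$ is non-decreasing.

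To conclude, suppose $V_j<V_{j+1}$, i.e.\ $f(s_j)<f(s_{j+1})$; then necessarily $s_j<s_{j+1}$. Fix any $k>j+1$. If $s_k=s_{j+1}$ then $V_k=V_{j+1}>V_j$; otherwise $s_j<s_{j+1}<s_k$, and writing $s_{j+1}=(1-\theta)s_j+\theta s_k$ with $\theta\in(0,1)$, convexity gives $f(s_{j+1})\le(1-\theta)f(s_j)+\theta f(s_k)$, so $f(s_k)\ge f(s_j)+\tfrac1\theta\bigl(f(s_{j+1})-f(s_j)\bigr)>f(s_j)$. In either case $V_k>V_j$, so no later iteration can improve on $V_j$; and since reaching iteration $j+1$ means iteration $j$ had already assigned $Sol:=V_j$, the early termination at line~\ref{return} discards nothing.

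The only genuinely delicate points I anticipate are bookkeeping ones: verifying that $f$ indeed agrees with $\sum_{i}DT_i\cdot PL_i$ at the \emph{clamped} arguments $s_j=\min\{S,d_j\}$, so that the $V_j$ are samples of a single fixed convex function and not of several different ones; and noting that on ranges where $d_j\ge S$ one simply has $s_j=s_{j+1}=S$ and $V_j=V_{j+1}$, so the hypothesis of the lemma cannot occur there. Once these are in place the convex-combination step is immediate.
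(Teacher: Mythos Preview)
Your argument is correct and rests on the same mathematical fact as the paper's: the marginal cost of a unit left-shift is $\sum_{\text{early}}\alpha_i-\sum_{\text{tardy}}\beta_i$, and this quantity can only grow as the shift increases, so once it turns positive it stays positive. The paper establishes this by deriving the inequality $\sum_{i\le j+1}\alpha_i>\sum_{i\ge j+2}\beta_i$ directly from $V_{j+1}>V_j$ and then arguing in words that further shifts only move more weight from the $\beta$-side to the $\alpha$-side. You instead package the whole schedule cost as a single function $f(s)$ of the cumulative shift, observe that it is piecewise-linear convex, verify by induction that $s_j=\min\{S,d_j\}$ so the iterates sample $f$ at a non-decreasing sequence of arguments, and finish with a one-line convex-combination estimate.

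The underlying idea is identical, but your formulation buys you two things the paper glosses over: it handles the clamping at $ES=0$ cleanly (the case $d_j\ge S$ simply gives $s_j=s_{j+1}$, so the lemma's hypothesis cannot arise there), and it avoids the paper's slightly loose step of writing $E_i^{j+1}=E_i^j+x$, $T_i^{j+1}=T_i^j-x$ uniformly for all $i$, which is literally false for the job that crosses the due date between iterations $j$ and $j+1$. Conversely, the paper's version is more self-contained in that it never names $f$ or appeals to convexity as a concept; it just compares two weighted sums.
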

\begin{proof}
If $V_{j+1}>V_{j}$ then it means that further left shift of the jobs does not fetch a better solution. Note that the objective function has two parts, penalty due to earliness and penalty due to tardiness. Let us consider the earliness and tardiness of the jobs after the $j^{th}$ iterations are $E_{i}^{j}$ and $T_{i}^{j}$ for $i=1,2,\dots,n$. Then we have $V_{j}=\sum_{\substack{i=1}}^{n} (\alpha_{i}E_{i}^{j}+\beta_{i}T_{i}^{j})$ and $V^{j+1}=\sum_{\substack{i=1}}^{n} (\alpha_{i}E_{i}^{j+1}+\beta_{i}T_{i}^{j+1})$. Besides, after every iteration of the $while$ loop in Algorithm~\ref{main}, the completion times are reduced or in other words the jobs are shifted left. This leads to an increase in the earliness and a decrease in the tardiness of the jobs. Let's say, the difference in the reduction between $V^{j+1}$ and $V^{j}$ is $x$. Then we have $E^{j+1}=E^{j}+x$ and $T_{j+1}=T_{j}-x$. Since $V^{j+1}>V^{j}$, we have: $\sum_{\substack{i=1}}^{n} (\alpha_{i}E_{i}^{j+1}+\beta_{i}T_{i}^{j+1}) > \sum_{\substack{i=1}}^{n} (\alpha_{i}E_{i}^{j}+\beta_{i}T_{i}^{j})$. By substituting the values of $E^{j+1}$ and $T^{j+1}$ we get,  $\sum_{\substack{i=1}}^{j+1} \alpha_{i}x > \sum_{\substack{i=j+2}}^{n} \beta_{i}x$. Hence, at the $(j+1)^{th}$ iteration the total penalty due to earliness exceeds the total penalty due to tardiness. This proves that for any further reduction there can not be an improvement in the solution because a decrease in the tardiness penalty will always be less than the increase in the earliness penalty.
\end{proof}

\begin{figure}[ht]
\centering
\includegraphics[width=11cm,height=8.5cm]{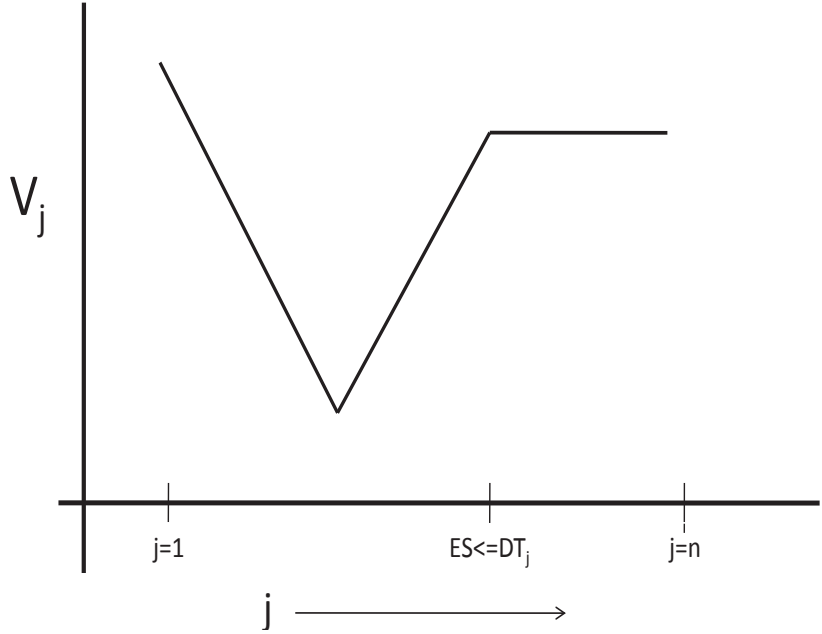}
\caption{The trend of the solution value against each iteration of Algorithm~\ref{main}, for a job sequence. The value of the solution does not improve any further after a certain number of reductions.}
\label{trend}
\end{figure}

\section{Algorithm Run-Time Complexity} 
In this section we study and prove the run-time complexity of the Algorithms~\ref{alotmac},~\ref{main} and~\ref{parallel}. We calculate the complexities of all the algorithms separately considering the worst cases for all. Let $T1, T2, T3$ be the run-time complexities of each algorithm respectively.

\begin{lemma}
The run-time complexity of Algorithm~\ref{parallel} is $O(mn^3)$ where $n$ and $m$ are the total number of jobs and machines, respectively.
\end{lemma}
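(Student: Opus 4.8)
The plan is to bound $T3$ by adding $T1$ (the cost of Algorithm~\ref{alotmac}) to the total cost of the $m$ invocations of Algorithm~\ref{main} in the loop of Algorithm~\ref{parallel}, and then to show the sum is $O(mn^3)$. First I would analyze Algorithm~\ref{alotmac}: the first \texttt{for} loop runs $m$ times with constant work per iteration, and the second \texttt{for} loop runs $n-m$ times, where each iteration must \textbf{compute} $\lambda=\argmin_{j} M_j$. A naive scan for the minimum over $m$ machines costs $O(m)$, so $T1=O(m)+O((n-m)\cdot m)=O(mn)$. (One could do better with a heap, but since we are only aiming for the $O(mn^3)$ bound, the crude estimate suffices and keeps the argument simple.)

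Next I would bound a single call to Algorithm~\ref{main} on a machine holding $n_j$ jobs. Initialization of the $C_i$, and computing $PL$, $DT$, $ES$, and the initial $Sol$ each take $O(n_j)$ time. The \texttt{while} loop over $j$ runs at most $n_j$ times; inside each iteration we update all completion times, recompute $PL$, $DT$, $ES$, and re-evaluate the sum $V_j=\sum_{i=1}^{n_j}(DT_i\cdot PL_i)$, each of which is an $O(n_j)$ operation. Hence one call to Algorithm~\ref{main} costs $O(n_j^2)$. Summing over all machines, and using $\sum_{j=1}^m n_j = n$ together with $n_j\le n$, the total loop cost is $\sum_{j=1}^m O(n_j^2)\le \sum_{j=1}^m O(n\cdot n_j)=O(n)\sum_{j=1}^m n_j=O(n^2)$. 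Adding $T1=O(mn)$ gives $T3=O(mn)+O(n^2)=O(mn^2)$, which is in fact sharper than the claimed $O(mn^3)$; since $O(mn^2)\subseteq O(mn^3)$, the stated bound follows a fortiori.

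The main obstacle is being careful about what "compute/update $PL,DT,ES$" actually costs per iteration and making sure the $\min\{ES,DT_j\}$ bookkeeping and the re-summation for $V_j$ are each genuinely linear rather than hiding a logarithmic or quadratic factor; a secondary subtlety is the fact that $n_j$ varies across machines, so I must use the convexity/telescoping inequality $\sum_j n_j^2\le n\sum_j n_j$ rather than a per-machine worst case of $O(n^2)$ naively multiplied by $m$. If instead one insists on charging each of the $m$ machine-subproblems the worst-case size $n$ (ignoring that the $n_j$ sum to $n$), one gets $O(mn^2)$ directly, and if one additionally over-counts the inner work as $O(n^2)$ per iteration one lands exactly on the stated $O(mn^3)$; I would present the tighter $O(mn^2)$ argument and remark that it implies the lemma as stated.
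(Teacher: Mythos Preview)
Your proof is correct and in fact establishes a strictly sharper bound than the lemma claims. Your additive decomposition $T3 = T1 + \sum_{j} O(n_j^2)$ accurately reflects the sequential structure of Algorithm~\ref{parallel} (run Algorithm~\ref{alotmac} once, then loop over machines), and your inequality $\sum_j n_j^2 \le n\sum_j n_j = n^2$ is the right way to aggregate the per-machine costs.

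The paper's argument differs in how it combines the pieces. It obtains $T1=O(mn)$ and $T2=O(n^2)$ exactly as you do, but then writes the total as a \emph{product}, $T3 = O(mn)\cdot\sum_{i=1}^m O(x_i^2) = O(mn)\cdot O(n^2) = O(mn^3)$, rather than as a sum. This multiplicative combination is not justified by the structure of Algorithm~\ref{parallel} and is the sole source of the extra factor of $mn$; your speculation that the $O(mn^3)$ arises from over-charging each machine with $n$ jobs and $O(n^2)$ inner work is not what the paper actually does. Your additive analysis is the more faithful one, and since $O(mn)+O(n^2)\subseteq O(mn^3)$ the stated lemma follows immediately. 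One minor remark: under the paper's standing assumption $m\le n$, your bound simplifies further to $O(n^2)$, so your intermediate claim $T3=O(mn^2)$ is already looser than necessary, though of course still correct.
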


\begin{proof}
To prove this we first consider the complexity of Algorithm~\ref{alotmac}. The most crucial step as per the run-time in Algorithm~\ref{alotmac} is the calculation of $\lambda$, which is of $O(m)$ order. The rest of the steps are of order $O(1)$. Hence, the overall complexity of Algorithm~\ref{alotmac} is $O(m+(n-m)m)$, where $O(m)$ corresponds to the first \textit{for} loop and $O((n-m)m)$ corresponds to the second \textit{for} loop involving the calculation for $\lambda$. Since the number of machines is usually much less than the number of jobs, we have $T1=O(mn)$.

As for Algorithm~\ref{main}, the calculations involved in the initialization step and evaluation of $PL,DT,ES,Sol$ are all of $O(n)$ complexity and their evaluation is irrespective of the any conditions unlike inside the $while$ loop. The $while$ loop again evaluates and updates these parameters at every step of its iteration and returns the output once their is no improvement possible. The worst case will occur when the $while$ loop is iterated over all the values of $j$, $j=2,3,\dots,n$. Hence the complexity of Algorithm~\ref{main} is $O(n^2)$ with $n$ being the number of jobs processed by the machine. Hence, $T2=O(n^2)$.

To calculate the complexity of Algorithm~\ref{parallel}, we need to consider all the cases of the number of jobs processed by each machine. Let $x_1,x_2,x_3,\dots,x_m$ be the number of jobs processed by the machines, respectively. Then, $\sum_{\substack{i=1}}^{m} x_i=n$. We make a reasonable assumption that the number of machines is less than the number of jobs, which is usually the case. In such a case the complexity of Algorithm~\ref{parallel} ($T3$) is equal to  $O(mn)\cdot \sum_{\substack{i=1}}^{m}O(x_{i}^{2})$. Since $\sum_{\substack{i=1}}^{m} x_{i} = n$, we have $\sum_{\substack{i=1}}^{m}O(x_{i}^{2})=O(n^2)$. Thus the complexity of Algorithm~\ref{parallel} is $O(mn^3)$.
\end{proof}

\section{Exponential Search: An efficient implementation of Algorithm~\ref{main}}
Algorithm~\ref{main} shifts the jobs to the left by reducing the completion times of all the jobs by $\min\{ES,DT_{j}\}$ on every iteration of the $while$ loop. The runtime complexity of the algorithm can be improved form $O(n^2)$ to $O(n\log n)$ by implementing an exponential search instead of a step by step reduction, as in Algorithm~\ref{main}. To explain this we first need to understand the slope of the objective function values for each iteration. In the proof of optimality of Algorithm~\ref{main}, we proved that there is only one minimum present in $V^{j}$ $\forall j$. Besides, the value of $DT_{j}$ increases for every $j$ as it depends on the completion times. Also note that the reduction in the completion times is made by $\min\{ES,DT_{j}\}$. Hence, if for any $j$, $ES \leq DT_{j}$ then every iteration after $j$ will fetch the same objective function value, $V^{j}$. Hence, the solution values after each iteration will have trend as shown in Figure~\ref{trend}.

With such a slope of the solution we can use the exponential search as opposed to a step by step search, which will in turn improve the run-time complexity of Algorithm~\ref{main}. This can be achieved by increasing or decreasing the step size of the $while$ loop by orders of $2$ ({\em i.e.\ \/}$2,2^2,2^3,\dots,n$) while keeping track of the slope of the solution. The index of the next iteration should be increased if the slope is negative and decreased if the slope is non-negative. At each step we need to keep track of the previous two indices and once the difference between the indices is less than the minimum of the two, then we need to perform binary search on the same lines. The optimum will be reached if both the adjacent solutions are greater than the current value. In this methodology we do not need to search for all values of $j$ but in steps of $2^j$. Hence the run-time complexity with exponential search will be $O(n\log n)$ for the single machine and $O(mn^2 \log n)$ for the parallel machine case.

\begin{table}
\centering
{\footnotesize 
\caption{Results obtained for single machine common due date problem and comparison with benchmark results provided in the OR Library~\cite{or}. For any given number of jobs there are $10$ different instances provided and each instance is designated a number $k$.}
\label{result}
\begin{tabular}{ |p{1cm}|p{1cm} p{0.9cm}|p{0.9cm} p{0.9cm}|p{0.9cm} p{0.9cm}|p{0.9cm} p{0.9cm}| }\hline
\textbf{Jobs} & \multicolumn{2}{|c|}{\textbf{h=0.2}} & \multicolumn{2}{|c|}{\textbf{h=0.4}} & \multicolumn{2}{|c|}{\textbf{h=0.6}} & \multicolumn{2}{|c|}{\textbf{h=0.8}} \\ \hline
\textbf{n=10} & \textbf{APSA} & \textbf{BR} & \textbf{APSA} & \textbf{BR} & \textbf{APSA} & \textbf{BR} & \textbf{APSA} & \textbf{BR} \\ \hline
\textbf{k=1} & 1936 & 1936 & 1025 & 1025 & 841 & 841 & 818 & 818\\ \hline
\textbf{k=2} & 1042 & 1042 & 615 & 615 & 615 & 615 & 615 & 615\\ \hline
\textbf{k=3} & 1586 & 1586 & 917 & 917 & 793 & 793 & 793 & 793\\ \hline
\textbf{k=4} & 2139 & 2139 & 1230 & 1230 & 815 & 815 & 803 & 803\\ \hline
\textbf{k=5} & 1187 & 1187 & 630 & 630 & 521 & 521 & 521 & 521\\ \hline
\textbf{k=6} & 1521 & 1521 & 908 & 908 & 755 & 755 & 755 & 755\\ \hline
\textbf{k=7} & 2170 & 2170 & 1374 & 1374 & 1101 & 1101 & 1083 & 1083\\ \hline
\textbf{k=8} & 1720 & 1720 & 1020 & 1020 & 610 & 610 & 540 & 540\\ \hline
\textbf{k=9} & 1574 & 1574 & 876 & 876 & 582 & 582 & 554 & 554\\ \hline
\textbf{k=10} & 1869 & 1869 & 1136 & 1136 & 710 & 710 & 671 & 671\\ \hline
\textbf{n=20} & \textbf{APSA} & \textbf{BR} & \textbf{APSA} & \textbf{BR} & \textbf{APSA} & \textbf{BR} & \textbf{APSA} & \textbf{BR} \\ \hline
\textbf{k=1} & 4394 & 4431 & 3066 & 3066 & 2986 & 2986 & 2986 & 2986\\ \hline
\textbf{k=2} & 8430 & 8567 & 4847 & 4897 & 3206 & 3260 & 2980 & 2980\\ \hline
\textbf{k=3} & 6210 & 6331 & 3838 & 3883 & 3583 & 3600 & 3583 & 3600\\ \hline
\textbf{k=4} & 9188 & 9478 & 5118 & 5122 & 3317 & 3336 & 3040 & 3040\\ \hline
\textbf{k=5} & 4215 & 4340 & 2495 & 2571 & 2173 & 2206 & 2173 & 2206\\ \hline
\textbf{k=6} & 6527 & 6766 & 3582 & 3601 & 3010 & 3016 & 3010 & 3016\\ \hline
\textbf{k=7} & 10455 & 11101 & 6279 & 6357 & 4126 & 4175 & 3878 & 3900\\ \hline
\textbf{k=8} & 3920 & 4203 & 2145 & 2151 & 1638 & 1638 & 1638 & 1638\\ \hline
\textbf{k=9} & 3465 & 3530 & 2096 & 2097 & 1965 & 1992 & 1965 & 1992\\ \hline
\textbf{k=10} & 4979 & 5545 & 3012 & 3192 & 2110 & 2116 & 1995 & 1995\\ \hline
\textbf{n=50} & \textbf{APSA} & \textbf{BR} & \textbf{APSA} & \textbf{BR} & \textbf{APSA} & \textbf{BR} & \textbf{APSA} & \textbf{BR} \\ \hline
\textbf{k=1} & 40936 & 42363 & 24146 & 24868 & 17970 & 17990 & 17982 & 17990\\ \hline
\textbf{k=2} & 31174 & 33637 & 18451 & 19279 & 14217 & 14231 & 14067 & 14132\\ \hline
\textbf{k=3} & 35552 & 37641 & 20996 & 21353 & 16497 & 16497 & \cellcolor{gray!50}16517 & \cellcolor{gray!50}16497\\ \hline
\textbf{k=4} & 28037 & 30166 & 17137 & 17495 & 14088 & 14105 & 14101 & 14105\\ \hline
\textbf{k=5} & 32347 & 32604 & 18049 & 18441 & 14615 & 14650 & 14615 & 14650\\ \hline
\textbf{k=6} & 35628 & 36920 & 20790 & 21497 & \cellcolor{gray!50}14328 & \cellcolor{gray!50}14251 & 14075 & 14075\\ \hline
\textbf{k=7} & 43203 & 44277 & 23076 & 23883 & 17715 & 17715 & 17699 & 17715\\ \hline
\textbf{k=8} & 43961 & 46065 & 25111 & 25402 & 21345 & 21367 & 21351 & 21367\\ \hline
\textbf{k=9} & 34600 & 36397 & 20302 & 21929 & 14202 & 14298 & \cellcolor{gray!50}14064 & \cellcolor{gray!50}13952\\ \hline
\textbf{k=10} & 33643 & 35797 & 19564 & 20048 & 14367 & 14377 & 14374 & 14377\\ \hline
\textbf{n=100} & \textbf{APSA} & \textbf{BR} & \textbf{APSA} & \textbf{BR} & \textbf{APSA} & \textbf{BR} & \textbf{APSA} & \textbf{BR} \\ \hline
\textbf{k=1} & 148316 & 156103 & 89537 & 89588 & 72017 & 72019 & 72017 & 72019\\ \hline
\textbf{k=2} & 129379 & 132605 & 73828 & 74854 & 59350 & 59351 & 59348 & 59351\\ \hline
\textbf{k=3} & 136385 & 137463 & 83963 & 85363 & \cellcolor{gray!50}68671 & \cellcolor{gray!50}68537 & \cellcolor{gray!50}68670 & \cellcolor{gray!50}68537\\ \hline
\textbf{k=4} & 134338 & 137265 & 87255 & 87730 & 69192 & 69231 & 69039 & 69231\\ \hline
\textbf{k=5} & 129057 & 136761 & 74626 & 76424 & 55291 & 55291 & 55275 & 55277\\ \hline
\textbf{k=6} & 145927 & 151938 & 81182 & 86724 & 62507 & 62519 & 62410 & 62519\\ \hline
\textbf{k=7} & 138574 & 141613 & 79482 & 79854 & \cellcolor{gray!50}62302 & \cellcolor{gray!50}62213 & 62208 & 62213\\ \hline
\textbf{k=8} & 164281 & 168086 & 95197 & 95361 & 80722 & 80844 & 80841 & 80844\\ \hline
\textbf{k=9} & 121189 & 125153 & 72817 & 73605 & 58769 & 58771 & 58771 & 58771\\ \hline
\textbf{k=10} & 121425 & 124446 & \cellcolor{gray!50}72741 & \cellcolor{gray!50}72399 & 61416 & 61419 & 61416 & 61419\\ \hline
\end{tabular}
}
\end{table}
\begin{table}
\centering
{\footnotesize
\caption{Results obtained for single machine common due date problem and comparison with benchmark results provided in the OR Library~\cite{or}. There are $10$ different instances provided and each instance is designated a number $k$.}
\label{result1}
\begin{tabular}{ |p{1.05cm}|p{0.9cm} p{0.9cm}|p{0.9cm} p{0.9cm}|p{0.9cm} p{0.9cm}|p{0.9cm} p{0.9cm}| }\hline
\textbf{Jobs} & \multicolumn{2}{|c|}{\textbf{h=0.2}} & \multicolumn{2}{|c|}{\textbf{h=0.4}} & \multicolumn{2}{|c|}{\textbf{h=0.6}} & \multicolumn{2}{|c|}{\textbf{h=0.8}} \\ \hline
\textbf{n=200} & \textbf{APSA} & \textbf{BR} & \textbf{APSA} & \textbf{BR} & \textbf{APSA} & \textbf{BR} & \textbf{APSA} & \textbf{BR} \\ \hline
\textbf{k=1} & 523042 & 526666 & 300079 & 301449 & 254268 & 254268 & \cellcolor{gray!50}254362 & \cellcolor{gray!50}254268\\ \hline
\textbf{k=2} & 557884 & 566643 & 333930 & 335714 & \cellcolor{gray!50}266105 & \cellcolor{gray!50}266028 & \cellcolor{gray!50}266549 & \cellcolor{gray!50}266028\\ \hline
\textbf{k=3} & 510959 & 529919 & 303924 & 308278 & 254647 & 254647 & 254572 & 254647\\ \hline
\textbf{k=4} & 596719 & 603709 & 359966 & 360852 & \cellcolor{gray!50}297305 & \cellcolor{gray!50}297269 & \cellcolor{gray!50}297729 & \cellcolor{gray!50}297269\\ \hline
\textbf{k=5} & 543709 & 547953 & 317707 & 322268 & \cellcolor{gray!50}260703 & \cellcolor{gray!50}260455 & 260423 & 260455\\ \hline
\textbf{k=6} & 500354 & 502276 & 287916 & 292453 & 235947 & 236160 & 236013 & 236160\\ \hline
\textbf{k=7} & 477734 & 479651 & 279487 & 279576 & 246910 & 247555 & 247521 & 247555\\ \hline
\textbf{k=8} & 522470 & 530896 & 287932 & 288746 & 225519 & 225572 & \cellcolor{gray!50}225897 & \cellcolor{gray!50}225572\\ \hline
\textbf{k=9} & 561956 & 575353 & 324475 & 331107 & 254953 & 255029 & 254956 & 255029\\ \hline
\textbf{k=10} & 560632 & 572866 & 328964 & 332808 & 269172 & 269236 & 269208 & 269236\\ \hline
\end{tabular}
}
\centering
{\footnotesize
\caption{Average run-times in seconds for the single machine cases for the obtained solutions. The average run-time for any job is the average of all the $40$ instances.}
\label{runtimeS}
\begin{tabular}{ |c|p{0.02cm}cp{0.02cm}|p{0.02cm}cp{0.02cm}|p{0.02cm}cp{0.02cm}|p{0.02cm}cp{0.02cm}|p{0.02cm}cp{0.02cm}|}\hline
\textbf{No. of Jobs} && 10 &&& 20 &&& 50 &&& 100 &&& 200 &\\ \hline
\textbf{BR} && 0.9 &&& 47.8 &&& 87.3 &&& 284.9 &&& 955.2& \\ \hline
\textbf{APSA} && 0.46 &&& 1.12 &&& 22.17 &&& 55.22 &&& 132.32 & \\ \hline
\end{tabular}
}
\end{table}

\section{Results}
In this section we present our results for the single and parallel machine cases. We used our exact algorithms with simulated annealing for finding the best job sequence. All the algorithms were implemented on MATLAB and run on a machine with a $1.73$ GHz processor and $2$ GB RAM. We first present our results for the benchmark instances provided by Biskup~\emph{et al.} in~\cite{biskup} and compare our results with the results provided in the OR-library~\cite{or}, for the single machine case. 

We use a modified Simulated Annealing algorithm to generate job sequences and Algorithm~\ref{main} to optimize each sequence to its minimum penalty. The ensemble size for SA is taken to be $20$ for all the instances. The initial temperature is kept as twice the standard deviation of the energy at infinite temperature: $\sigma_{E_{T=\infty}} = \sqrt{\langle E^2 \rangle_{T=\infty} - \langle E \rangle^2_{T=\infty}}$. We estimate this quantity by randomly sampling the configuration space~\cite{salamon}. An exponential schedule for cooling is adopted with a cooling rate of $0.999$. One of the modifications from the standard SA is in the acceptance criterion. We implement two acceptance criteria: the Metropolis acceptance probability, $\min\{1,\exp((-\hspace{-0.3em}\bigtriangleup\hspace{-0.3em} E)/T)\}$~\cite{salamon} and a constant acceptance probability of $0.07$. A solution is accepted with this constant probability if it is rejected by the Metropolis criterion. This concept of a constant probability is useful when the SA is run for many iterations and the metropolis acceptance probability is almost zero, since the temperature would become infinitesimally small. Apart from this, we also incorporate elitism in our modified SA. Elitism has been successfully adopted in evolutionary algorithms for several complex optimization problems~\cite{elitist1,elitist2}. We observed that this concept works well for the CDD problem. As for the perturbation rule, we first randomly select a certain number of jobs in any job sequence and permute them randomly to create a new sequence. The number of jobs selected for this permutation is taken as  $2 + \lfloor\sqrt{n/10}\rfloor$, where $n$ is the number of jobs. For large instances the size of this permutation is quite small but we have observed that it works well with our modified simulated annealing algorithm.

\begin{table}
\centering
{\footnotesize
\caption{Results obtained for parallel machines for the benchmark instances for $k=1$ with $2$, $3$ and $4$ machines up to $200$ jobs.}
\label{result3}
\begin{tabular}{ | c | c | c | c |c|}\hline
\textbf{No. of Jobs} & \textbf{Machines} & \textbf{h value} &  \textbf{Results Obtained} & \textbf{Run-Time (seconds)}  \\ \hline
\multirow{6}{*}{\textbf{10}} & \multirow{2}{*}{2} & 0.4 & 612 & 0.0473 \\ \cline{3-5}
& & 0.8 & 398 & 0.0352 \\ \cline{2-5}
& \multirow{2}{*}{3} &  0.4 &  507 & 0.0239 \\ \cline{3-5}
& & 0.8 & 256 & 0.0252 \\ \cline{2-5}
& \multirow{2}{*}{4} & 0.4 & 364 & 0.0098 \\ \cline{3-5}
& & 0.8 &  197 & 0.0157 \\ \hline
\multirow{6}{*}{\textbf{20}} & \multirow{2}{*}{2} & 0.4 & 1527 & 0.4061 \\ \cline{3-5}
& & 0.8 & 1469 & 0.6082 \\ \cline{2-5}
& \multirow{2}{*}{3} &  0.4 &  1085 & 3.4794 \\ \cline{3-5}
& & 0.8 & 957 & 7.8108 \\ \cline{2-5}
& \multirow{2}{*}{4} & 0.4 & 848 & 8.5814 \\ \cline{3-5}
& & 0.8 &  686 & 8.4581 \\ \hline
\multirow{6}{*}{\textbf{50}} & \multirow{2}{*}{2} & 0.4 & 12911 & 7.780 \\ \cline{3-5}
& & 0.8 & 9020 & 55.3845 \\ \cline{2-5}
& \multirow{2}{*}{3} &  0.4 &  8913 & 59.992 \\ \cline{3-5}
& & 0.8 & 6010 & 125.867 \\ \cline{2-5}
& \multirow{2}{*}{4} & 0.4 & 7097 & 153.566 \\ \cline{3-5}
& & 0.8 &  4551 & 22.347 \\ \hline
\multirow{6}{*}{\textbf{100}} & \multirow{2}{*}{2} & 0.4 & 45451 & 101.475 \\ \cline{3-5}
& & 0.8 & 37195 & 147.832 \\ \cline{2-5}
& \multirow{2}{*}{3} &  0.4 &  31133 & 159.872 \\ \cline{3-5}
& & 0.8 & 25097 & 186.762 \\ \cline{2-5}
& \multirow{2}{*}{4} & 0.4 & 23904 & 236.132 \\ \cline{3-5}
& & 0.8 &  19001 & 392.967 \\ \hline
\multirow{6}{*}{\textbf{200}} & \multirow{2}{*}{2} & 0.4 & 154094 & 165.436 \\ \cline{3-5}
& & 0.8 & 133848 & 231.768 \\ \cline{2-5}
& \multirow{2}{*}{3} &  0.4 &  103450 & 226.140 \\ \cline{3-5}
& & 0.8 & 96649 & 365.982 \\ \cline{2-5}
& \multirow{2}{*}{4} & 0.4 & 81437 & 438.272 \\ \cline{3-5}
& & 0.8 &  71263 & 500.00 \\ \hline
\end{tabular}
}
\end{table}
In Table~\ref{result} and~\ref{result1} we present our results (APSA) for the single machine case and compare them with the benchmark results (BR) provided in the OR-library~\cite{or}. The first $40$ instances with $10$ jobs each have been already solved optimally by Biskup~\emph{et al.} and we reach the optimality for all these instances within an average run-time of $0.457$ seconds. 
Among the next $160$ instances we achieve equal results for $13$ instances, better results for $133$ instances and for the remaining $14$ instances with $50$, $100$ and $200$ jobs, our results are within a gap of $0.803$ percent, $0.1955$ percent and $0.1958$ percent respectively. Feldmann and Biskup~\cite{feldmann} solved these instances using three metaheuristic approaches, namely: simulated annealing, evolutionary strategies and threshold accepting; and presented the average run-time for the instances on a Pentium/$90$ PC. 
In Table~\ref{runtimeS} we show our average run-times for the instances and compare them with the heuristics approach considered in~\cite{feldmann}. Apparently our approach is faster and achieves better results. However, there is a difference in the machines used for the implementation of the algorithms. In Table~\ref{result3} we present results for the same problem but with parallel machines for the Biskup benchmark instances. The computation has been carried out for $k=1$ up to $200$ jobs and a different number of machines with restrictive factor $h$. We make a change in the due date as the number of machines increases and assume that the due date $D$ is $D=\lfloor h \cdot \sum_{\substack{i=1}}^{n} P_{i}/m\rfloor$. This assumption makes sense as an increase in the number of machines means that the jobs can be completed much faster and reducing the due-date will test the whole setup for more competitive scenarios. We implemented Algorithm~\ref{parallel} with six different combinations of the number of machines and the restrictive factor. Since these instances have not been solved for the parallel machines, we are presenting the upper bounds achieved for these instances using Algorithm~\ref{parallel} and the modified simulated annealing.

\section{Conclusion and Future Direction}
In this paper we present two novel exact polynomial algorithms for the common due-date problem to optimize any given job sequence. We prove the optimality for the single machine and the run-time complexity of the algorithms. We implemented our algorithms over the benchmark instances provided by Biskup~\emph{et al.}~\cite{biskup} up to $200$ jobs. The results obtained by using our algorithms are superior to the benchmark results in quality. We also present results for the parallel machine case for the same instances.

In future we intend to study our algorithms over all the instances up to $1000$ jobs with some hybridized metaheuristic approach. We are planning to implement these metaheuristic approaches using graphics processing units (GPUs) and provide speed-ups in runtime for all the instances. Further we plan to implement it over all $280$ benchmark instances for several independent machines.

\section*{Acknowledgement}
The research project was promoted and funded by the European Union and the Free State of Saxony, Germany. The authors take the responsibility for the content of this publication.

\end{document}